\newtheorem{theorem}{Theorem}[section]
\theoremstyle{definition}
\newtheorem{algorithm}[theorem]{Algorithm}
\newcommand{\exclude}[1]{}
\definecolor{darkgreen}{rgb}{0,0.5,0}
\definecolor{lightgreen}{rgb}{0.5,0.9,0.5}
\definecolor{magenta}{rgb}{0.75,0,0.25}
\definecolor{violet}{rgb}{0.25,0,0.75}
\newcommand{\be}{\begin{equation}}
\newcommand{\ee}{\end{equation}}
\newcommand{\bea}{\begin{eqnarray}}
\newcommand{\eea}{\end{eqnarray}}
\newcommand{\beast}{\begin{eqnarray*}}
\newcommand{\eeast}{\end{eqnarray*}}
\newcommand{\bproof}{\begin{proof}}
\newcommand{\eproof}{\end{proof}}
\newcommand{\Description}[1]{\relax}
\newcommand{\curvewidth}{0.98\textwidth} 
\newcommand{\gridwidth}{0.3\textwidth} 
\newcommand{\CMT}[1]{\COMMENT{#1}}
\newcommand\footnoteref[1]{\protected@xdef\@thefnmark{\ref{#1}}\@footnotemark}
\begin{document}

\title{On the Closest Pair of Points Problem}



\author[1]{Martin Hitz
}

\author[2]{Michaela Hitz
}
\affil[1]{Department of Informatics Systems, University of Klagenfurt, Klagenfurt, Austria\\ ORCID:~0000-0002-8581-1720, martin.hitz@aau.at (corresponding author)}
\affil[2]{Department of Statistics, University of Klagenfurt, Klagenfurt, Austria\\ ORCID:~0000-0003-2987-2251, michaela.hitz
@aau.at}



\maketitle
\begin{abstract}
We introduce two novel algorithms for the problem of finding the closest pair in a cloud of $n$ points based on findings from mathematical optimal packing theory. Both algorithms are deterministic, show fast effective runtimes, and are very easy to implement. For our main algorithm, cppMM, we prove $O(n)$ time complexity for the case of uniformly distributed points. 
Our second algorithm, cppAPs, is almost as simple as the brute-force approach, but exhibits an extremely fast empirical running time, although its worst-case time complexity is also $O(n^2)$. 
We embed the new algorithms in a review of the most prominent contenders and empirically demonstrate their runtime behavior for problem sizes up to $n =$ 33,554,432 points observed in our C++ test environment. For large $n$, cppMM dominates the other algorithms under study. \\

Keywords:~closest pair of points problem, algorithms, complexity.
\end{abstract}




\section{Introduction}

`Closest Pair of Points' (CPP) is a classical problem in computational geometry with relevant applications. For example, modern traffic-control systems, e.g.~for self-driving cars, need to know the two closest vehicles to detect potential collisions.

The problem can be formulated as follows:~Given a set $P =  \{p_1,\ldots, p_n \} = \{(p_1.x, p_1.y), \allowbreak  
\ldots, (p_n.x, p_n.y)\}$ of $n\in\mathbb{N}$ points in a compact set on the plane\footnote{Most algorithms can be generalized to points in a higher-dimensional space. For the sake of clarity of presentation, we consider the 2-dimensional case only.} (e.g., the unit square), find the pair of points that are closest together with respect to a metric $d_k(p, q) = \sqrt[k]{|p.x - q.x|^k + |p.y-q.y|^k}$ for any $k \geq 1$. In what follows, we always use the Euclidean distance $d_2(p, q) = \lVert(p, q)\rVert = \sqrt{(p.x - q.x)^2 + (p.y - q.y)^2)}$. 

The literature reports several solutions:~The brute force approach, which compares the distance of all pairs of points, has time complexity $\Theta(n^2)$, a divide-and-conquer algorithm with $O(n\log n)$ running time \cite{CormenEtAl2003}, or an $O(n\log n)$ plane-sweep algorithm \cite{HinrichsEtAl1988}. These are examples for optimal algorithms under the decision-tree model of computation which requires $\Omega(n\log n)$ steps for this problem \cite{PreparataShamos1985}. Rabin \cite{Rabin1976} proposed a probabilistic algorithm, later commented and slightly modified by Lipton \cite{Lipton2009}. Provided a hash table simulating a possibly huge two-dimensional matrix guarantees $O(1)$ access time to the matrix elements, the latter solves the problem in $O(n)$ time. For this, it exploits topological information by assigning all points to the cells of a grid the size of which depends on an estimate of the shortest distance that is computed in the brute-force manner from a random sample of all input points. Another probabilistic $O(n)$ algorithm put forward by Khuller and Matias shows some similarities to the Rabin-Lipton approach, in the sense that it also uses a grid to capture the topological distribution of the points, with a grid size which also depends on an approximation of the shortest distance computed from a random sample \cite{KhullerMatias1995}.

We revisit the CPP problem and present two novel algorithms that are easy to implement, deterministic, non-recursive, and yield very promising results.
Our main idea for our main contribution, algorithm cppMM, is based on findings from mathematical optimal packing theory. In particular, we make use of a proven upper bound $\bar{\delta}$ of the minimum distance $\delta$ of two points in a cloud of size $n$ for constructing a grid akin to those mentioned above \cite{SzaboEtAl2001}. By this we spare the step of approximating the minimal distance. Our approach leads to a significantly coarser and hence straight-forwardly implementable grid of size $O(n)$, while keeping the average case complexity linear for many cases.
The other algorithm, cppAPs, is a significantly accelerated and thus practically well usable version of the brute-force approach, augmented by a simple stopping condition akin to the `active window' used by Hinrichs et al. \cite{HinrichsEtAl1988}.

We present a comprehensive simulation study of our algorithms, comparing them to several algorithms from the literature.
On top, for cppMM, we prove linear time complexity for the case of uniformly distributed points.
All algorithms discussed were implemented in C++ and subjected to an empirical running time comparison using randomly generated point clouds from $2^{10} = 1024$ to $2^{25} = 33\,554\,432$ points.

\section{Classical algorithms}

\paragraph*{Comparing All Pairs}
The straightforward algorithm cppAP compares all $\binom{n}{2}$ pairs of points with each other and takes $\Theta(n^2)$ time, see Algorithm \ref{alg:cppAP}.

\begin{algorithm}[h]
\caption{\, cppAP$(P)\rightarrow <(p_{min}, q_{min}), \delta>$}
\begin{algorithmic}[1] 
\REQUIRE $P = \{p_1, \ldots, p_n\}, n \geq 2$ 
\ENSURE $\delta = \min\{\lVert(p, q) \rVert \mid p,q \in P, \, p\neq q\}$
\STATE $\delta \leftarrow \infty$
\FOR{$i \leftarrow 1$ \textbf{to} $n-1$} 
    \FOR{$j  \leftarrow i+1$ \textbf{to} $n$} 
        \IF{$\lVert (p_i, p_j) \rVert < \delta$}
            \STATE $\delta \leftarrow \lVert (p_i, p_j) \rVert$, cpp $ \leftarrow (p_i, p_j)$
        \ENDIF
    \ENDFOR
\ENDFOR
\RETURN $<\operatorname{cpp}, \delta>$ \CMT{return a structure consisting of two closest points and their distance $\delta$}
\end{algorithmic}
\label{alg:cppAP}
\end{algorithm}

Note that one could linearly speed up cppAP by replacing the Euclidean distance $\lVert (p, q) \rVert$ with its square, thus avoiding the calculation of the square root in lines 4 and 5, and returning the square root of $\delta$ in line 6.\footnote{In our experimental setting, this speed-up was about 7.5\%. This trick can be applied to several of the algorithms investigated. However, for our empirical studies, we always use the Euclidean distance proper, since some algorithms need it.}

\paragraph*{Divide-and-Conquer}
A faster approach is the Divide-and-Conquer algorithm; it takes $O(n \log n)$ time \cite{CormenEtAl2003}. The idea is to split the point set into two halves by the points' $x$-coordinates, recursively find the closest pair within each half, and take the minimum $\delta$ of the two resulting distances. Then, check if there is a pair of points between the two halves that is closer than the minimum distance in either of the halves, in which case, use that distance as the final result.

\begin{algorithm}[h]
\caption{\textbf{a} cppDC\textsubscript{0}$(P)\rightarrow\langle(p_{min},q_{min}),\delta\rangle$}\label{alg:cppDC}
\begin{algorithmic}[1] 
\REQUIRE $P = \{p_1, \ldots, p_n\}, n \geq 2$ 
\ENSURE $\delta = \min\{\lVert (p, q) \rVert \mid p,q \in P, \, p\neq q \}$
\STATE $X \leftarrow$ sort\textsubscript{x}$(P)$ 
\STATE $Y \leftarrow$ sort\textsubscript{y}$(P)$ 
\RETURN cppDC$(X, 1, n, Y)$ \CMT{see Algorithm \ref{alg:cppDC}b}
\end{algorithmic}
\end{algorithm}

In order to partition the plane in the divide phase, in a preparation step (cppDC\textsubscript{0}), the input set $P$ is sorted by the $x$-coordinates into an array $X = [(x_i, y_i) \in P  \mid x_i \leq x_j,  1 \leq i < j \leq n]$. In each recursion step, the divide phase splits the pertinent part of $X$ into two sub-arrays to the left and the right of its mid point, respectively. In the conquer phase, a $2\delta$ wide (vertical) stripe centered around that mid-point is searched for a possible pair of closer points. For each point in the left half of this stripe, only the points in the right half within a $y$-distance of at most $\pm\delta$ are checked. To support this process, a second array $Y = [(x_i, y_i) \in P \mid y_i \leq y_j,  1 \leq i < j \leq n]$ is computed in the preparation step, containing the points of $P$ in the order of increasing $y$-coordinates. Details are described in Algorithm \ref{alg:cppDC}.

\setcounter{algorithm}{1}
\begin{algorithm}[h]
\caption{\textbf{b} cppDC$(X, \operatorname{first}, \operatorname{last}, Y)\rightarrow <(p_{min}, q_{min}), \delta>$}
\begin{algorithmic}[1] 
\REQUIRE $n := \operatorname{last} - \operatorname{first} + 1, n = |Y|, n \geq 2$ 
\ENSURE $\delta = \min\{\lVert (p, q) \rVert \mid p,q \in P, \, p\neq q \}$
\IF{$n \leq 3$}
    \STATE $< \operatorname{cpp}, \delta> \leftarrow$ cppAP$(X[\operatorname{first}:\operatorname{last}])$ \CMT{brute force, see Algorithm \ref{alg:cppAP}}
\ELSE
    \STATE $m \leftarrow \lfloor(\operatorname{first} + \operatorname{last}) / 2\rfloor$
    \STATE $\operatorname{mid} \leftarrow (x_m, y_m) \in X$
    \STATE $Y_l \leftarrow [(x_i, y_i) \in Y \mid x_i \leq \operatorname{mid}.x]$ \CMT{order preserved}
    \STATE $Y_r \leftarrow [(x_i, y_i) \in Y \mid x_i > \operatorname{mid}.x]$ \CMT{order preserved}
    \STATE $<\operatorname{cpp}_l, d_l> \leftarrow$ cppDC$(X, \operatorname{first}, m, Y_l)$
    \STATE $<\operatorname{cpp}_r, d_r> \leftarrow$ cppDC$(X, m+1, \operatorname{last}, Y_r)$
    \IF{$d_l < d_r$}
        \STATE $\delta \leftarrow d_l$, $\operatorname{cpp} \leftarrow \operatorname{cpp}_l$
    \ELSE
        \STATE $\delta \leftarrow d_r$, $\operatorname{cpp} \leftarrow \operatorname{cpp}_r$
    \ENDIF
    \STATE $X' \leftarrow [(x_i, y_i) \in X \mid \operatorname{mid}.x - \delta \leq x_i \leq \operatorname{mid}.x]$  
    \STATE $Y' \leftarrow [(x_i, y_i) \in Y_r \mid x_i \leq \operatorname{mid}.x + \delta]$ \CMT{order preserved}
    \FOR{$p_1 \in X'$}
        \FOR{$p_2 \in [p \in Y' \mid |p_1.y - p.y| \leq 2\delta]$\footnotemark} 
            \IF{$\lVert (p_1, p_2) \rVert < \delta$}
                \STATE $\delta \leftarrow \lVert (p_1, p_2) \rVert$, $\operatorname{cpp} \leftarrow (p_1, p_2)$
            \ENDIF
        \ENDFOR
    \ENDFOR
\ENDIF
\RETURN $< \operatorname{cpp}, \delta>$
\end{algorithmic}
\end{algorithm}
\footnotetext{To compute this subset of $Y'$ we binary search the highest point on or below $p_1.y + \delta$ and proceed downwards until we reach the level of $\delta$ below $p_1.y$.}

\paragraph*{Plane-Sweep}
The plane-sweep approach by Hinrichs et al. \cite{HinrichsEtAl1988} is based on the general plane-sweep technique put forward by Shamos and Hoey \cite{ShamosHoey1976}.
This iterative algorithm moves a vertical line over the plane from left (no points at all considered yet) to right, processing a single point at a time until all points have been checked and the final solution for the whole set is obtained. This `sweeping' over the plane is orchestrated by a queue of points $xQueue$ ordered by their x-coordinates. 
The currently relevant points to compare the processed point with are maintained in a map $yTable$ based on y-coordinates, which associates the points' y-coordinates with their respective points. 

$xQueue$ is initialized with the set of points sorted according to their x-coordinates. The closest pair of points $cpp$ is initialized with the first two points in the queue, $\delta$ holding the respective distance, i.e.~the shortest distance found so far. The two points are also inserted into $yTable$.
In a loop over all points $current$ (starting with $q_3$, the third point of $xQueue$), we first remove all points $q_t$ from $yTable$ that lie left of $current$ and are further away than $\delta$ – thus outside of the so-called `active window'. The remaining points in $yTable$ with y-coordinates between $current.y - \delta$ and $current.y + \delta$ are compared to $current$, yielding a new closest pair as well as an update of $\delta$, if the distance is less than $\delta$. After the complete rectangular $\delta \times 2\delta$ environment to the left of $current$ has been checked in this way, point $current$ is entered into $yTable$, and the loop proceeds to the next point.

\begin{algorithm}[h]
\caption{cppPS$(P)\rightarrow <(p_{min}, q_{min}), \delta>$}\label{alg:cppPS}
\begin{algorithmic}[1] 
\REQUIRE $P = \{p_1, \ldots, p_n\}, n \geq 2$ 
\ENSURE $\delta = \min\{\lVert (p, q) \rVert \mid p,q \in P, \, p\neq q \}$
\STATE xQueue $=[q_1,\dots,q_n] \leftarrow$ sort\textsubscript{x}$(P)$ \CMT{xQueue holds the points $q_i$ sorted by their x-coordinates}
\STATE yTable $\leftarrow []$ \CMT{yTable maps the y-coordinates of points to the points themselves}
\STATE cpp $ \leftarrow (q_1, q_2)$
\STATE $\delta \leftarrow \lVert (q_1, q_2) \rVert$
\STATE yTable.insert($q_1$), yTable.insert($q_2$)
\STATE $t \leftarrow 0$ \CMT{$\widehat{=}$ pointer `tail' into xQueue \cite{HinrichsEtAl1988}}
\FOR{$i \leftarrow 3$ \textbf{to} $n$} 
    \STATE current $\leftarrow$ $q_i$
    \WHILE{$q_t.x \leq$ current.$x - \delta$}
        \STATE yTable.delete($q_t$)
        \STATE $t \leftarrow t+1$ 
    \ENDWHILE
    \FOR{$p \in [p \in \operatorname{yTable} \mid |p.y - \operatorname{current}.y| < \delta]$}
        \IF {$p \neq \operatorname{current} \land \lVert (\operatorname{current}, p) \rVert < \delta$}
            \STATE $\delta \leftarrow \lVert (\operatorname{current}, p) \rVert$, $\operatorname{cpp} \leftarrow (\operatorname{current}, p)$
        \ENDIF
    \ENDFOR
    \STATE yTable.insert(current)
\ENDFOR
\RETURN $< \operatorname{cpp}, \delta>$
\end{algorithmic}
\end{algorithm}

As explained by Hinrichs et al. \cite{HinrichsEtAl1988}, the running time required to initialize $xQueue$ is $O(n \log n)$ and $O(\log n)$ for entering (or deleting) a y-coordinate value into $yTable$. Since the latter is done once for each point in an iteration over all $n$ points, the overall running time results in $O(n \log n)$.

\paragraph*{Rabin-Lipton}
Rabin \cite{Rabin1976} showed how to solve CPP in linear time using a probabilistic algorithm that has been commented on and slightly adapted by Lipton \cite{Lipton2009}. The `Rabin-Lipton approach' cppRL proceeds as follows:~Take a random sample of $\sqrt{n}$ points from $P$ and compute the minimum distance $d$ of points in this sample by cppAP. This takes $O(\sqrt{n}^2) = O(n)$ steps. Then divide the unit square into a grid of square cells sized $d \times d$. Although for very small values of $d$, the number $\lceil 1/d \rceil \times \lceil 1/d \rceil$ of cells can be very large ($\gg n$), only at most $n$ cells will contain one or more points of $P$. It is necessary to organize these up to $n$ `filled' cells in a way that it is possible to access them in sequence in $O(n)$ steps, and individually in $O(1)$ steps. A hash table of size $O(n)$ with keys composed of both grid indices $i, j$ will do the job. Using this data structure, for each point $p\in P$, check all points which potentially have a distance at most $d$ to $p$. These must reside either in the same cell as $p$ or in one of the at most eight adjacent filled cells. Thus, for each of the $n$ points in $P$ it suffices to compute the distances to the points of at most nine cells.\footnote{For symmetry reasons, it suffices to check the cell itself and half of its neighbours, see Section \ref{sec:cppMM} for details.\label{FN:half}} \footnote{Note that the original approach calls for two separate phases:~First, compute the closest pair in every filled grid cell. Second, compute the closest pair for every pair of adjacent filled cells, see \citep{Lipton2009}. Although our implementation does basically the same thing, we believe that our scheme `\textit{compute the closest pair within the cell and its filled neighbours}' is easier to understand.\label{FN:original}}
Details are described in Algorithm~\ref{alg:cppRL}. As Rabin shows, this can be done in constant time, yielding an overall average case time complexity of $O(n)$. Note that Lipton argues that for the proof of the result, \textit{distances} should be sampled instead of \textit{points} \cite{Lipton2009}. In our experiments, distance sampling turned out to be also empirically faster than point sampling -- so we present the running times for cppRL with distance sampling below.

\begin{algorithm}
\caption{cppRL$(P)\rightarrow <(p_{min}, q_{min}), \delta>$}\label{alg:cppRL} 
\begin{algorithmic}[1] 
\REQUIRE $P = \{p_1, \ldots, p_n \} \subset [0, 1]^2, n \geq 2$  
\ENSURE $\delta = \min\{\lVert (p, q) \rVert \mid p,q \in P, \, p\neq q\}$
\STATE $S \leftarrow$ randomSample$(P, \lfloor\sqrt{n}\rfloor)$ \CMT{take random sample of size $\lfloor\sqrt{n}\rfloor$}
\STATE $< \operatorname{cpp}, d> \leftarrow$ cppAP$(S)$ \CMT{brute force, see Algorithm \ref{alg:cppAP}}
\STATE $g \leftarrow \lceil 1/d \rceil$
\STATE $G := g \times g$ grid of empty buckets \CMT{$G$ uses a hash table of size $O(n)$}
\FOR{$p \in P$}
    \STATE $i \leftarrow \lfloor p.x / d \rfloor + 1$, $j \leftarrow \lfloor p.y / d \rfloor + 1$
    \STATE $G_{i, j} \leftarrow G_{i, j} \cup \{p\}$ \CMT{$G_{i, j}$ is (now) a `filled' bucket}
    \STATE visited$(G_{i, j}) \leftarrow$ \textbf{false} \CMT{only necessary if visited($G_{i, j}$) is not initialized in line 4}
\ENDFOR
\STATE $\delta \leftarrow d$
\FOR{$p \in P$}
    \STATE $i \leftarrow \lfloor p.x / d \rfloor + 1$, $j \leftarrow \lfloor p.y / d \rfloor + 1$ \CMT{found `filled' bucket $G_{i, j}$}
    \IF {$\neg $visited$(G_{i, j})$}
        \STATE visited$(G_{i, j}) \leftarrow$ \textbf{true} \CMT{make sure to visit the bucket only once}
        \FOR{$p_1 \in G_{i, j}$}
            \FOR{$p_2 \in (G_{i, j} \setminus\{p_1\}) \cup G_{i, j+1} \cup G_{i+1, j-1} \cup G_{i+1, j} \cup G_{i+1, j+1}$\footnoteref{FN:half} \footnoteref{FN:original} ~\footnoteref{FN:nonexHash}}
                \IF{$\lVert (p_1, p_2) \rVert < \delta$}
                    \STATE $\delta \leftarrow \lVert (p_1, p_2) \rVert$, $\operatorname{cpp} \leftarrow (p_1, p_2)$
                \ENDIF
            \ENDFOR
        \ENDFOR
    \ENDIF
\ENDFOR
\RETURN $< \operatorname{cpp}, \delta>$
\end{algorithmic}
\end{algorithm}

\paragraph*{Khuller-Matias}
Khuller and Matias present another randomized algorithm with time complexity $O(n)$ that is based on a filtering approach (sieve) \cite{KhullerMatias1995}. The sieving process starts with a copy $P_1$ of the entire point set and iteratively proceeds as follows:~In iteration $i$, compute an approximation of $\delta$ by picking a random point $x_i$ from $P_i$ and finding the minimum distance $d(x_i)$ from that point to the other points in $P_i$. Then, construct a grid of size $d(x_i)/3$ and assign the points of $P_i$ to its cells. Using this grid, define $P_{i+1}$ by taking all points from $P_i$ except those that happen to reside alone in their grid cell, surrounded by eight empty neighbouring cells. This iteration terminates when $P_{i+1}$ becomes empty. In the final phase of the algorithm, create a grid of the size of the last computed $d(x_i)$ and assign the points of $P$ to its cells. Then, for each of these points, find the closest partner-point within its grid cell and its eight neighbouring cells. From these $n$ `local' closest pairs, select the one with smallest distance as the overall result.
Details are described in Algorithm \ref{alg:cppKM}. Similar to cppRL, the grids used are implemented as hash tables to guarantee $O(n)$ space complexity.

\begin{algorithm}
\caption{cppKM$(P)\rightarrow <(p_{min}, q_{min}), \delta>$} \label{alg:cppKM} 
\begin{algorithmic}[1] 
\REQUIRE $P = \{p_1, \ldots, p_n \} \subset [0, 1]^2, n \geq 2$ 
\ENSURE $\delta = \min\{\lVert (p, q) \rVert \mid p,q \in P, \, p\neq q\}$
\STATE $i \leftarrow 1$, $P_1 \leftarrow P$ 
\WHILE{$P_i \neq \varnothing$}
    \STATE $x_i \leftarrow$ randomSample$(P_i, 1)$ \CMT{select a single random point from $P_i$}
    \STATE $d \leftarrow 2$
    \FOR{$p \in P_i\setminus\{x_i\}$}
        \IF{$\lVert (x_i, p) \rVert < d$}
            \STATE $d \leftarrow \lVert (x_i, p) \rVert$
        \ENDIF
    \ENDFOR
    \STATE $g \leftarrow d/3$
    \STATE $G := g \times g$ grid of empty buckets \CMT{$G$ uses a hash table of size $O(n)$}
    \FOR{$p \in P_i$}
        \STATE $i \leftarrow \lfloor p.x / g \rfloor + 1$, $j \leftarrow \lfloor p.y / g \rfloor + 1$
        \STATE $G_{i, j} \leftarrow G_{i, j} \cup \{p\}$
    \ENDFOR
    \STATE $P_{i+1} \leftarrow P_i$ \CMT{prepare next iteration}
    \FOR{$p \in P_i$}
        \STATE $i \leftarrow \lfloor p.x / g \rfloor + 1$, $j \leftarrow \lfloor p.y / g \rfloor + 1$
        \STATE $N \leftarrow G_{i-1, j-1}\cup G_{i-1, j}\cup G_{i-1, j+1}\cup G_{i, j-1} \cup G_{i, j}\cup G_{i, j+1}\cup G_{i+1, j-1}\cup G_{i+1, j}\cup G_{i+1, j+1}$ \footnotemark
        \IF{$|N| = 1$} 
            \STATE $P_{i+1} \leftarrow P_{i+1} \setminus \{p\}$\CMT{remove $p$ from next iteration}
        \ENDIF
    \ENDFOR
    \STATE $i \leftarrow i+1$
\ENDWHILE
\STATE $G := d \times d$ grid of empty buckets
\FOR{$p \in P$}
    \STATE $i \leftarrow \lfloor p.x / d \rfloor + 1$, $j \leftarrow \lfloor p.y / d \rfloor + 1$
    \STATE $G_{i, j} \leftarrow G_{i, j} \cup \{p\}$
\ENDFOR
\STATE $\delta \leftarrow d$
\FOR{$p_1 \in P$}
    \STATE $i \leftarrow \lfloor p_1.x / d \rfloor + 1$, $j \leftarrow \lfloor p_1.y / d \rfloor + 1$
    \STATE $N \leftarrow G_{i-1, j-1}\cup G_{i-1, j}\cup G_{i-1, j+1}\cup G_{i, j-1} \cup  (G_{i, j}\setminus\{p_1\})\cup G_{i, j+1}\cup G_{i+1, j-1}\cup G_{i+1, j}\cup G_{i+1, j+1}$\footnoteref{FN:nonexHash}
    \FOR{$p_2\in N$}
        \IF{$\lVert (p_1, p_2) \rVert < \delta$}
            \STATE $\delta \leftarrow \lVert (p_1, p_2) \rVert$, $\operatorname{cpp} \leftarrow (p_1, p_2)$
        \ENDIF
    \ENDFOR
\ENDFOR
\RETURN $< \operatorname{cpp}, \delta>$
\end{algorithmic}
\end{algorithm}
\footnotetext{The hash table used to implement $G$ assures that empty buckets are returned for non-existing grid cell indices, so we do not have to deal with boundary conditions here.\label{FN:nonexHash}}

\section{Novel algorithms}

In this section we present two novel algorithms -- an improvement of cppAP that we call cppAPs (short for `All Pairs sorted') and our main result, cppMM.
Note that both algorithms are completely deterministic.

\subsection{Improving the comparing-all-pairs approach:~\textbf{cppAPs}} \label{sec:cppAPs}

It may seem strange at first glance to work on improving an $O(n^2)$ approach when asymptotically faster algorithms exist. However, our all-pairs-sorted algorithm cppAPs is extremely easy to implement, and, in our experimental setting, turns out to be the fastest of all tested algorithms up to $n=2^{21}$ in case of uniformly distributed test-points (see Fig.~\ref{fig:UD-RT-C}) and up to $n=2^{23}$ in case of truncated normally distributed test-points (see Fig.~\ref{fig:ND-RT-C}). Therefore, cppAPs seems to be a good alternative for practical purposes, even though its worst-case time complexity is still $O(n^2)$.

The idea behind cppAPs is as follows:~First, sort $P$ by the $x$-coordinates of the points (yielding a point-list $X$). Then proceed similar as in cppAP, but in both \textbf{for}-loops test all possible pairs of points in the sequence of ascending $x$-coordinates. If for a given point $q_i \in X$, the nested \textbf{for}-loop (which starts at point $q_{i+1}$) reaches a point $q_j$ with an $x$-coordinate farther apart from $q_i$ than the current minimum distance $d$ (e.g.~$q_j.x - q_i.x \leq d$), it makes no sense to check this point and the subsequent points, because the current minimum will not change anymore. Note that this is related to the approach to concentrate on the `active window' by Hinrichs et al. \cite{HinrichsEtAl1988}. Therefore, the nested \textbf{for}-loop may terminate after very few -- but, unfortunately, possibly $O(n)$ -- iterations. Details are described in Algorithm \ref{alg:cppAPs}.

\begin{algorithm}[h]
\caption{cppAPs$(P)\rightarrow <(p_{min}, q_{min}), \delta>$}\label{alg:cppAPs}
\begin{algorithmic}[1] 
\REQUIRE $P = \{p_1, \ldots, p_n\}, n \geq 2$  
\ENSURE $\delta = \min\{\lVert (p, q) \rVert \mid p,q \in P, \, p\neq q\}$
\STATE $Q=[q_1,\dots,q_n] \leftarrow$ sort\textsubscript{x}$(P)$ 
\STATE $\delta \leftarrow \infty$
\FOR{$i \leftarrow 1$ \textbf{to} $n-1$} 
    \FOR{$j  \leftarrow i+1$ \textbf{to} $n$ \textbf{while} $q_j.x - q_i.x \leq \delta$}
        \IF{$\lVert (q_i, q_j) \rVert < \delta$}
            \STATE $\delta \leftarrow \lVert (q_i, q_j) \rVert$, $\operatorname{cpp} \leftarrow (q_i, q_j)$
        \ENDIF
    \ENDFOR
\ENDFOR
\RETURN $<\operatorname{cpp}, \delta>$ 
\end{algorithmic}
\end{algorithm}

In order to illustrate the behavior of the nested \textbf{for}-loop (lines 4--6 in Algorithm \ref{alg:cppAPs} and lines 3--5 in Algorithm \ref{alg:cppAP}), let us compare the increase of the number of its iterations $I_2(n)$ when doubling the input size $n$ in algorithms cppAP and cppAPs. In Figure \ref{fig:for2}, the factor $I_2(n)/I_2(n/2)$ is plotted from $n=2^{11}$ to $n=2^{25}$ for cppAP and cppAPs, respectively. We observe that cppAP always doubles the number of iterations of the nested \textbf{for}-loop when doubling $n$, while cppAPs approaches that factor 2 only slowly and only for very large values of $n$. Thus cppAPs, while exhibiting asymptotic time complexity of $O(n^2)$, shows close to linear runtime until very large values of~$n$.
For many practical purposes, cppAPs is therefore a very good choice.

\begin{figure}[h]
    \centering
    \includegraphics[width=\curvewidth]{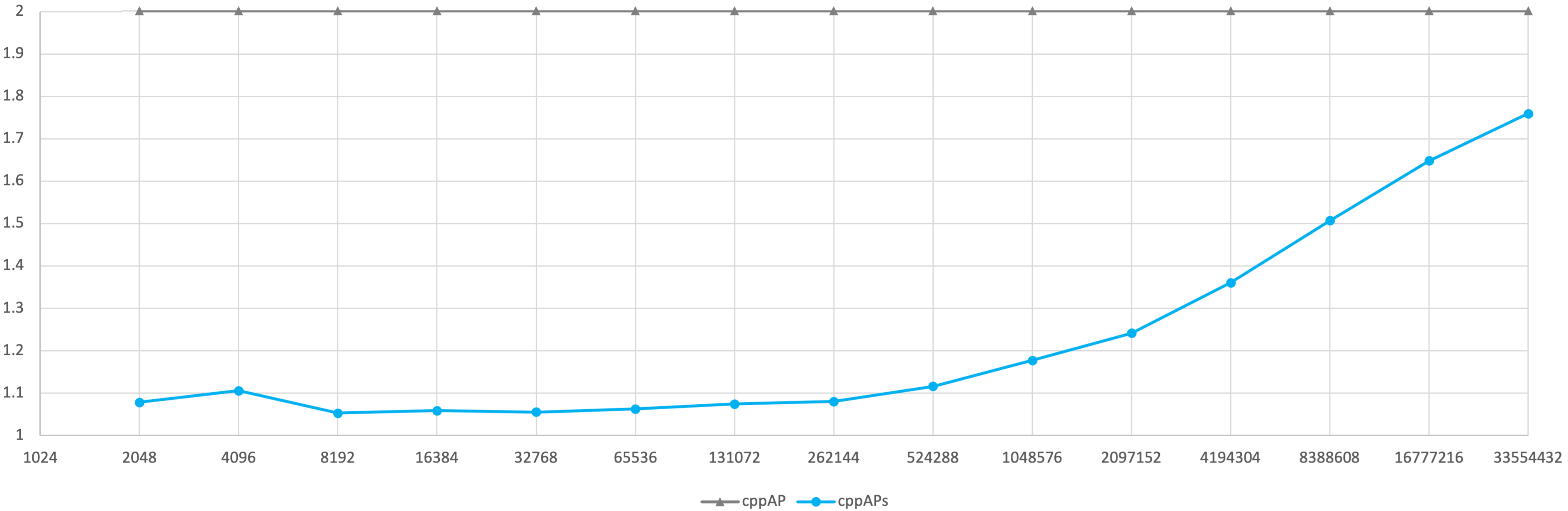}
    \caption{Number of iterations $I_2(n)$ of the nested \textbf{for}-loop of cppAP and cppAPs for $n$ points divided by $I_2(n/2)$. Log scale on the abscissa.}
    \Description{Growth of the number of iterations of the nested for-loop for cppAPs (rising from factor 1.1. to 1.75) vs. cppAP (always factor 2).}
\label{fig:for2}
\end{figure}

\subsection{A deterministic linear complexity algorithm:~\textbf{cppMM}} \label{sec:cppMM}

For constructing our main algorithm, we adopt the idea by Rabin of dividing the unit square into a grid of cells and searching these cells and their neighbours for the closest pair. In cppRL a simple search of their grid is not possible in linear time due to the very small grid size. Our idea is to make the grid as large as possible but still small enough such that not too many comparisons will be needed for each cell. Hence, we require an upper bound as sharp as possible for the maximum of the minimal distance. We will make use of the following result from mathematical optimal packing theory, which originates from Szabó et al. \cite{SzaboEtAl2001}.

\begin{theorem}[\text{\cite[Theorem 2]{SzaboEtAl2001}}]
The maximum of the minimal distance is not greater than
\begin{equation}\label{eq:upperbound}
 \bar{\delta} = \min(U_1(n), U_2(n)),
\end{equation} 
with
\begin{equation*}
U_1(n) = \frac{2}{\sqrt{n\pi+C_n(\sqrt{3} - \frac{\pi}{2}) + (4\lfloor \sqrt{n}\rfloor - 2)(2 - \frac{\pi}{2})}-2},
\end{equation*}
where
\begin{equation*}
  C_n=\begin{cases}
    n - 2, & \text{if $3 \leq n \leq 6$,}\\
    n - 1, & \text{if $7 \leq n \leq 9$,}\\
    3\lfloor \frac{n}{2} \rfloor - 5 + n\bmod{2}, & \text{otherwise}
  \end{cases}
\end{equation*}
and
\begin{equation*}
U_2(n) = \frac{1 + \sqrt{1+(n-1)\frac{2}{\sqrt{3}}}}{n - 1}.
\end{equation*}
\end{theorem}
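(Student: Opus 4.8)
The plan is to prove the two bounds separately, $\delta\le U_1(n)$ and $\delta\le U_2(n)$, for the minimal pairwise distance $\delta$ of a configuration of $n$ points in the unit square; since the asserted $\bar\delta$ is their minimum, and since for the theorem it suffices to treat the extremal configuration (the one attaining the maximal minimal distance, which exists by compactness of $([0,1]^2)^n$), this is enough. Both bounds rest on the same elementary observation: the open disks $D_i$ of radius $\delta/2$ centred at the points are pairwise disjoint, so they occupy total area $n\,\frac{\pi}{4}\,\delta^2$, which must be reconciled --- after honest accounting near the boundary of $[0,1]^2$ --- with the fact that everything lives in a bounded region. The two inequalities correspond to two such accountings.

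For $U_1(n)$ I would run a Groemer-type area argument. Pass from the disks $D_i$ to the Dirichlet--Voronoi cells $V_i$ of the point set, each clipped to the square grown by a margin $\delta/2$, so that $D_i\subseteq V_i$ and $\sum_i|V_i|\le(1+\delta)^2$. The geometric input is the classical hexagonal lower bound (Thue, Fejes T\'oth): in a packing of disks of radius $\delta/2$ each Voronoi cell has area at least that of the regular hexagon of inradius $\delta/2$, namely $\frac{\sqrt3}{2}\delta^2$, so each cell beats its inscribed disk by at least the surplus $\frac12(\sqrt3-\frac{\pi}{2})\delta^2$; a cell whose generator sits near a wall cannot be a full hexagon inside the enlarged square and carries a further surplus on the order of $(2-\frac{\pi}{2})\delta^2$. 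Counting, via Euler's formula for the Delaunay graph together with the fact that any two points are $\ge\delta$ apart, how many cells are truly interior (this produces the coefficient $C_n$, with its small-$n$ case split) and how many abut the four sides (this produces the count $4\lfloor\sqrt n\rfloor-2$), one arrives at
\[
(1+\delta)^2\;\ge\;\sum_i|V_i|\;\ge\;\tfrac14\Big(n\pi+C_n\big(\sqrt3-\tfrac{\pi}{2}\big)+(4\lfloor\sqrt n\rfloor-2)\big(2-\tfrac{\pi}{2}\big)\Big)\delta^2 .
\]
Solving this quadratic inequality in $\delta>0$ gives exactly $\delta\le U_1(n)$.

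For $U_2(n)$ I would instead invoke Oler's classical inequality: for $m$ points with pairwise distances $\ge 1$ in a convex body $K\subset\mathbb{R}^2$ one has $m\le\frac{2}{\sqrt3}\,\mathrm{area}(K)+\frac12\,\mathrm{perimeter}(K)+1$. Rescaling by $1/\delta$ turns our configuration into $n$ points pairwise $\ge 1$ apart in $K=[0,1/\delta]^2$, so Oler gives $n\le\frac{2}{\sqrt3\,\delta^2}+\frac{2}{\delta}+1$, i.e. $(n-1)\delta^2-2\delta-\frac{2}{\sqrt3}\le 0$; the positive root of this upward-opening quadratic is $\frac{1+\sqrt{1+(n-1)\,2/\sqrt3}}{\,n-1\,}=U_2(n)$, hence $\delta\le U_2(n)$. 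Taking $\bar\delta=\min(U_1(n),U_2(n))$ completes the proof; note in passing that $U_2(n)$ carries the correct hexagonal-packing asymptotic constant and is therefore the binding bound for large $n$, whereas $U_1(n)$, whose $\lfloor\sqrt n\rfloor$-term reflects the square's boundary, is sharper for small and moderate $n$.

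The $U_2(n)$ half is routine once Oler's inequality is quoted, and the two quadratic solves are mechanical. The main obstacle is the boundary bookkeeping in the $U_1(n)$ half: turning the heuristic ``a Voronoi cell touching a wall wastes area'' into a rigorous per-cell lower bound with the stated constant $2-\frac{\pi}{2}$, handling the corner cells separately when invoking the hexagonal cell-area bound, and --- the genuinely delicate point --- pinning down the \emph{exact} combinatorial counts $C_n$ and $4\lfloor\sqrt n\rfloor-2$ (rather than mere $O(\sqrt n)$ estimates), which is what forces the case distinction in the definition of $C_n$ and requires a careful analysis of the Delaunay triangulation of the extremal configuration and of how its points can distribute along the four sides of $[0,1]^2$.
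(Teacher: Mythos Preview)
The paper does not prove this theorem at all: it is quoted verbatim from Szab\'o et al.\ (2001) and used as a black box, the only subsequent observation being the one-line remark that $C_n\in O(n)$ forces $U_1(n),U_2(n)\in O(1/\sqrt n)$ and hence $\bar\delta\in O(1/\sqrt n)$. There is therefore no ``paper's own proof'' against which to compare your attempt.

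That said, your sketch is a plausible reconstruction of how such bounds are typically obtained. The $U_2$ half via Oler's inequality is clean and the quadratic you solve does give exactly $U_2(n)$. For the $U_1$ half, the Groemer/Fejes T\'oth cell-area strategy is the right family of ideas, and you are candid that the real content lies in the precise combinatorial counts $C_n$ and $4\lfloor\sqrt n\rfloor-2$ together with the per-cell constants $\sqrt3-\pi/2$ and $2-\pi/2$; as presented this part remains a heuristic outline rather than a proof, since you have not actually derived those counts or justified the boundary-cell surplus with the exact constant. If your goal were to supply a self-contained proof, that bookkeeping would need to be carried out in full (or you would simply cite Szab\'o et al., as the paper does).
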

Since $C_n\in O(n)$, we have that $U_1(n)\in O(\frac{1}{\sqrt{n}})$ as well as $U_2(n)\in O(\frac{1}{\sqrt{n}})$, showing that $\bar\delta\in O(\frac{1}{\sqrt{n}})$.

We use this result to divide the unit square into a $\lceil 1/\bar{\delta} \rceil \times \lceil 1/\bar{\delta} \rceil$ grid $G$ with cell size $\bar{\delta} \times \bar{\delta}$ and assign each point $p \in P$ to the respective cell. This is similar to cppRL, but employs a grid with only $O(n)$ cells which can be directly implemented as a two-dimensional array without jeopardizing linear running time nor linear space complexity. Other advantages of our algorithm are that we spare the precomputation step from cppRL and that this makes the algorithm deterministic, while cppRL is a randomized algorithm.

To find the closest pair, we iterate over all points in the grid, checking all potential partner-points within the same grid cell $G_{i,j}$, within the right neighbouring cell $G_{i,j+1}$, within the cell $G_{i+1,j-1}$ diagonally left underneath, within the cell $G_{i+1,j}$ directly underneath, and within the cell $G_{i+1,j+1}$ diagonally right underneath (whenever they exist). 
Figure \ref{fig:Grid} illustrates this search for the cell $G_{2,4}$.

\begin{figure}[h]
    \centering
    \includegraphics[width=\gridwidth]{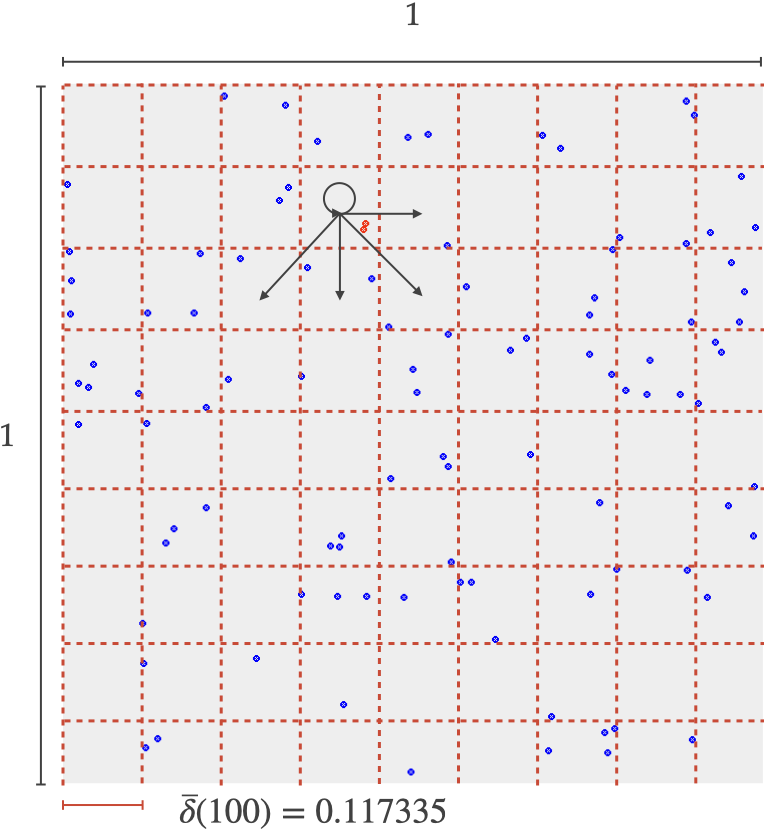}
    \caption{For $n = 100$ points in the unit square, the upper bound for the minimum distance is $\bar{\delta}=0.117$. Since $1/\bar{\delta}=8.523$, we construct a $ 
    9\times9$ grid with smaller residual grid cells on the right and bottom border of the square. A closest pair of points must either lie entirely in one of the grid cells, or one point resides in a cell and the other point in one of the eight neighbouring cells. Due to symmetry, it suffices to check only four of its neighbours of each cell, as illustrated for the case of cell $G_{2,4}$.}
    \Description{Arrows illustrate that the points in cell 2/4 are compared with each other and with the points in cells 2/5, 3/3, 3/4, and 3/5.}
    \label{fig:Grid}
\end{figure}

All partner-points in other cells have either already been checked during previous iterations or are too far apart to contain viable partner-points for a minimal distance due to the definition of the grid; that is, the final closest pair will have a distance not more than the cell size $\bar{\delta}$.
Details of the `Maximum-Minimal-distance‘ algorithm cppMM are described in Algorithm \ref{alg:cppMM}.

\begin{algorithm}[h] 
\caption{cppMM$(P)\rightarrow <(p_{min}, q_{min}), \delta>$}
\begin{algorithmic}[1] 
\REQUIRE $P = \{p_1, \ldots, p_n \} \subset [0, 1]^2, n \geq 2$  
\ENSURE $\delta = \min\{\lVert (p, q) \rVert \mid p,q \in P, \, p\neq q \}$
\STATE $\bar{\delta} \leftarrow$ upper bound from Equation \eqref{eq:upperbound}
\STATE $w \leftarrow \max\{p.x \mid p\in P\} - \min\{p.x \mid p\in P\}$
\STATE $h \leftarrow \max\{p.y \mid p\in P\} - \min\{p.y \mid p\in P\}$  
\STATE $\bar{\delta} \leftarrow \bar{\delta} \times \sqrt{w \times h}$ \CMT{scale $\bar{\delta}$ to size of actual point cloud}
\STATE $g \leftarrow \lceil 1/\bar{\delta} \rceil$
\STATE $G := g \times g$ grid of empty buckets
\FOR{$p \in P$}
    \STATE $i \leftarrow \lfloor p.x / \bar{\delta} \rfloor + 1$, $j \leftarrow \lfloor p.y / \bar{\delta} \rfloor + 1$
    \STATE $G_{i, j} \leftarrow G_{i, j} \cup \{p\}$ 
\ENDFOR
\STATE $\delta \leftarrow \bar{\delta}$
\FOR{$i \leftarrow 1$ \textbf{to} $g$}
    \FOR{$j \leftarrow 1$ \textbf{to} $g$}
        \FOR{$p_1 \in G_{i, j}$}
            \FOR{$p_2 \in  G_{i, j+1} \cup (G_{i, j} \setminus\{p_1\}) \cup G_{i+1, j-1} \cup G_{i+1, j} \cup G_{i+1, j+1}$\footnotemark}
                \IF{$\lVert (p_1, p_2) \rVert < \delta$}
                   \STATE $\delta \leftarrow \lVert (p_1, p_2) \rVert$, $\operatorname{cpp} \leftarrow (p_1, p_2)$
                \ENDIF
            \ENDFOR
        \ENDFOR
    \ENDFOR
\ENDFOR
\RETURN $< \operatorname{cpp}, \delta>$
\end{algorithmic}
\label{alg:cppMM}
\end{algorithm} 
\footnotetext{For the sake of brevity of presentation we do not mention boundary cases here. Non-existing grid cells are defined as empty sets.}

The correctness of the algorithm follows from the fact that it is a straightforward speed-up version of cppAP, excluding most unnecessary comparisons in the nested \textbf{for}-loop of Algorithm \ref{alg:cppAP} (lines 3--5) by using the `topologically guided' loop on lines 14--16 of Algorithm \ref{alg:cppMM}. 

\section{Performance analysis}
In this section we present a comprehensive empirical analysis, where we compare the performance of cppMM to the other algorithms.  We implemented all algorithms in C++.\footnote{For the review, the code can be downloaded from //www.aau.at/en/isys/ias/research/ and will be published under an appropriate open-source license when the paper is published.} The run-time comparisons were computed on an Apple M3 based MacBook Air with eight cores and 16 GB memory under macOS Sequoia 15.3, using the \verb|<time.h>| library function \verb|clock_gettime()|. The hash table for the implementation of the grids needed for cppRL and cppKM provides for $2^{ \lceil\log_2{3n}\rceil}$ entries and employs external collision chains with a typical mean length of $0.12$ (for non-empty entries).

The point sets for our test series were generated using out of the box pseudo-random number generators from the C++ library (\verb|std::default_random_engine| and\newline
\verb|std::normal_distribution<double>|). We analyze uniformly distributed and truncated\footnote{As the two-dimensional normal distribution lives on the whole of $\mathbb{R}^2$.} normally distributed point sets in the unit square.
For the convenience of the reader, we recall the following.
\paragraph*{Truncated normal distribution}
Let $f_{\mathcal{N}_2}$ be the density of the two-dimensional normal distribution with parameters $(\mu,\Sigma)$, that is for all $x\in\mathbb{R}^2$,
\begin{equation*}
f_{\mathcal{N}_2}(x)=\frac{1}{2\pi\sqrt{\det(\Sigma)}}\cdot\exp\Big(-\frac{1}{2} (x-\mu)^\top \Sigma^{-1} (x-\mu)\Big).
\end{equation*}
The density $f_{\mathcal{TN}_2}$ of its truncated version is chosen so that all sample points stay within the unit square, that is for all $x\in[0,1]^2$ we have
\begin{equation*}
f_{\mathcal{TN}_2}(x)= \frac{1}{\int_{[0,1]^2} f_{\mathcal{N}_2}(y) dy} \cdot f_{\mathcal{N}_2}(x).
\end{equation*}
Note that the first factor in $f_{\mathcal{TN}_2}$ above scales the function such that together with the choice $f_{\mathcal{TN}_2}(x)=0$ for all
$x\in\mathbb{R}^2\setminus[0,1]^2$, $f_{\mathcal{TN}_2}$ is a density.\footnote{
 Note that neither is $\mu$ in general the expectation, nor is $\Sigma$ the covariance matrix of $\mathcal{TN}_2(\mu,\Sigma)$.}\\
 
We sample from a two-dimensional truncated normal distribution $\mathcal{TN}_2(\mu,\Sigma)$ via the acceptance-rejection method.
In our implementation we choose the parameters $\mu=(0.5,0.5)^\top$ to center the point cloud within the unit square and $\Sigma=\text{diag}_2(0.04,0.04)$ to achieve a noteworthy difference to the uniform distribution (cf.~Figure \ref{fig:UDND1024}). For convenience, denote $\sigma = \sqrt{\Sigma_{11}} = \sqrt{\Sigma_{22}}$. Note that the choice $\sigma=0.2=\sqrt{0.04}$ leads to a variance in each coordinate of the truncated normal distribution of $0.036$, hence only a little smaller than $0.04$. In this special case, due to symmetry of the density, the expectation is the same as in the non-truncated case.

We start with $n = 2^{10} = 1024$ points and double the input size up to $n = 2^{25} = 33\,554\,432$ points\footnote{Due to its extensive running time, we stop cppAP earlier, since it is already obvious that it is overtaken by all other algorithms.}. Figure \ref{fig:UDND1024} illustrates typical instances of the problem with $n=1024$.

\begin{figure}[h]
    \centering
    \includegraphics[width=\gridwidth]{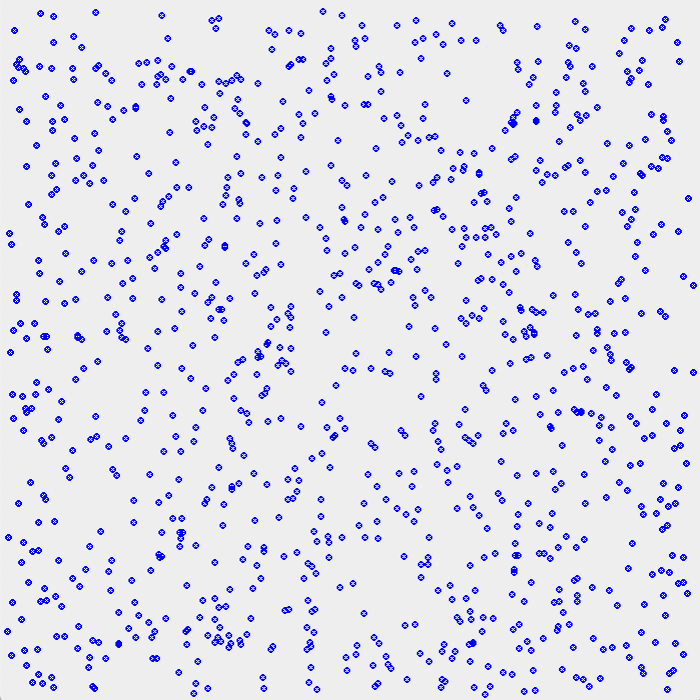}
    \includegraphics[width=\gridwidth]{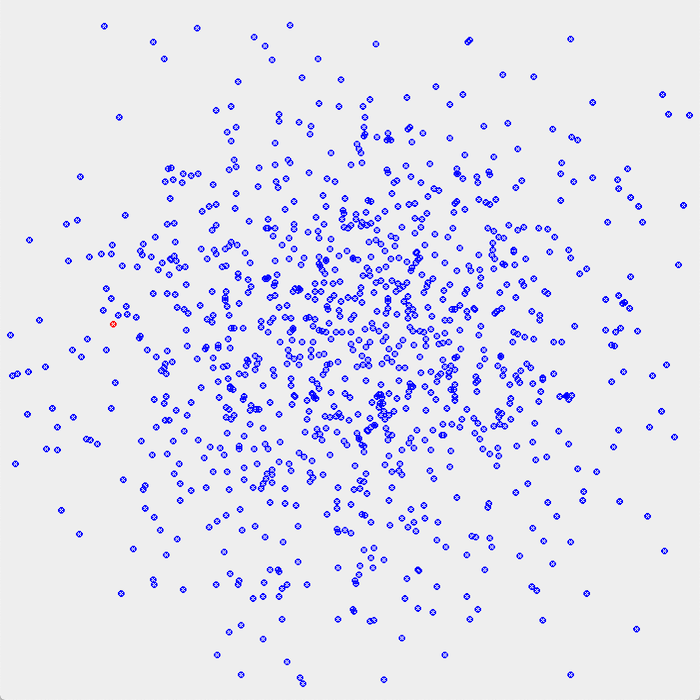}
    \caption{Typical test-sets with 1024 pseudo-random points:~uniform distribution (left) and truncated normal distribution with $\mu = (0.5, 0.5)^\top$ and $\sigma = 0.2$ (right).}
    \Description{The left point cloud contains equally scattered points, while the points of the right cloud are concentrated in the center.}
    \label{fig:UDND1024}
\end{figure}

Tables \ref{tab:UD-RT} and \ref{tab:ND-RT} as well as Figures \ref{fig:UD-RT-C} and \ref{fig:ND-RT-C} show the accumulated running time in seconds for ten different randomly drawn CPP problem instances for each value of $n$.

\begin{table}[h]
\scriptsize
    \centering
    \begin{tabular}{|r|r|r|r|r|r|r|r|r|}
        \hline
        $k$ & $n = 2^k$ & AP & APs & DC & KM & MM & PS & RL \\
        \hline
        10 & 1,024 & 4,711 & 103 & 388 & 529 & 132 & 531 & 172 \\
        11 & 2,048 & 10,625 & 112 & 497 & 643 & 168 & 642 & 222 \\
        12 & 4,096 & 42,020 & 222 & 1,055 & 1,268 & 339 & 1,283 & 526 \\
        13 & 8,192 & 172,488 & 428 & 2,271 & 2,545 & 666 & 2,538 & 1,014 \\
        14 & 16,384 & 706,161 & 879 & 4,840 & 5,037 & 1,320 & 5,058 & 1,850 \\
        15 & 32,768 & 2,826,759 & 1,894 & 10,754 & 11,425 & 2,776 & 10,857 & 3,923 \\
        16 & 65,536 & 11,339,138 & 3,704 & 22,701 & 23,195 & 5,542 & 21,104 & 8,032 \\
        17 & 131,072 & 45,350,724 & 7,908 & 45,989 & 47,616 & 11,499 & 43,420 & 18,113 \\
        18 & 262,144 &  & 17,865 & 91,294 & 127,298 & 25,853 & 94,709 & 53,226 \\
        19 & 524,288 &  & 43,761 & 182,940 & 364,034 & 59,576 & 213,286 & 158,652 \\
        20 & 1,048,576 &  & 115,600 & 369,691 & 937,741 & 161,351 & 499,074 & 400,559 \\
        21 & 2,097,152 &  & 334,576 & 748,543 & 2,186,129 & 421,355 & 1,200,979 & 811,589 \\
        22 & 4,194,304 &  & 1,092,343 & 1,532,403 & 4,772,536 & 1,016,446 & 3,013,602 & 1,791,181 \\
        23 & 8,388,608 &  & 3,891,827 & 3,231,553 & 9,825,886 & 2,159,184 & 8,182,573 & 4,043,268 \\
        24 & 16,777,216 &  & 14,485,259 & 6,829,380 & 19,934,263 & 4,552,791 & 23,970,723 & 8,831,863 \\
        25 & 33,554,432 &  & 55,736,141 & 14,728,577 & 40,933,580 & 9,533,593 & 75,796,244 & 16,009,355 \\
        \hline
    \end{tabular}
    \caption{Average running times of Algorithms 1-7 for 10 instances for each problem size $n$ in microseconds; uniformly distributed points.}
    \label{tab:UD-RT}
\end{table}

\begin{table}[h]
\scriptsize
    \centering
    \begin{tabular}{|r|r|r|r|r|r|r|r|r|}
        \hline
        $k$ & $n = 2k$ & AP & APs & DC & KM & MM & PS & RL \\
        \hline
        10 & 1,024 & 2,776 & 68 & 257 & 343 & 124 & 346 & 112 \\
        11 & 2,048 & 10,962 & 134 & 543 & 694 & 255 & 690 & 225 \\
        12 & 4,096 & 43,832 & 271 & 1,167 & 1,404 & 508 & 1,437 & 506 \\
        13 & 8,192 & 176,529 & 523 & 2,489 & 3,112 & 1,024 & 2,804 & 996 \\
        14 & 16,384 & 708,588 & 1,031 & 5,376 & 6,154 & 2,082 & 5,614 & 2,072 \\
        15 & 32,768 & 2,838,028 & 2,016 & 11,708 & 12,346 & 4,149 & 11,081 & 3,909 \\
        16 & 65,536 & 11,346,148 & 4,201 & 24,289 & 23,422 & 8,321 & 22,600 & 8,110 \\
        17 & 131,072 & 45,399,074 & 8,721 & 51,635 & 48,254 & 16,864 & 46,884 & 17,161 \\
        18 & 262,144 &  & 18,577 & 107,927 & 127,070 & 35,729 & 98,483 & 48,199 \\
        19 & 524,288 &  & 42,053 & 231,625 & 371,522 & 78,180 & 214,898 & 149,684 \\
        20 & 1,048,576 &  & 102,855 & 500,531 & 950,452 & 199,526 & 491,998 & 376,713 \\
        21 & 2,097,152 &  & 261,540 & 1,098,059 & 2,121,538 & 511,603 & 1,132,139 & 918,130 \\
        22 & 4,194,304 &  & 725,205 & 2,376,220 & 5,010,292 & 1,227,949 & 2,653,479 & 1,825,625 \\
        23 & 8,388,608 &  & 2,300,839 & 5,179,681 & 9,617,666 & 2,669,723 & 6,756,197 & 3,799,797 \\
        24 & 16,777,216 &  & 7,765,041 & 11,717,941 & 21,658,310 & 5,820,174 & 17,971,181 & 7,784,499 \\
        25 & 33,554,432 &  & 27,665,930 & 25,888,110 & 40,848,654 & 12,587,923 & 51,003,423 & 15,213,277 \\
        \hline
    \end{tabular}
    \caption{Average running times of Algorithms 1-7 for 10 instances for each problem size $n$ in microseconds; truncated normally distributed points.}
    \label{tab:ND-RT}\end{table}

In Figure \ref{fig:UD-RT-C} and Table \ref{tab:UD-RT} we consider the case of uniformly distributed points, while  Figure \ref{fig:ND-RT-C} Table \ref{tab:ND-RT} show the results of the experiment with truncated normally distributed points.
All axes use logarithmic scales. 

In the case of a uniform point distribution (Figure \ref{fig:UD-RT-C}), cppAPs turns out to be the real-time-wise fastest algorithm up to $n=2^{21}$, closely followed by cppMM. Only then cppAPs's asymptotical time complexity of $O(n^2)$ becomes (slowly) visible and it is overtaken by cppMM. For $n > 2^{21}$ up to the end of the range of our experiments, $n = 2^{25}$, cppMM performs best. cppDC performs very similar to its linear contenders cppRL, cppKM, and cppMM, despite its $O(n\log n)$ time complexity.

In the case of truncated normally distributed points (Figure \ref{fig:ND-RT-C}), the running time measurements yield qualitatively similar results,
with cppAPs fastest up to $n = 2^{23}$.

\begin{figure}[h]
    \centering
    \includegraphics[width=\curvewidth]{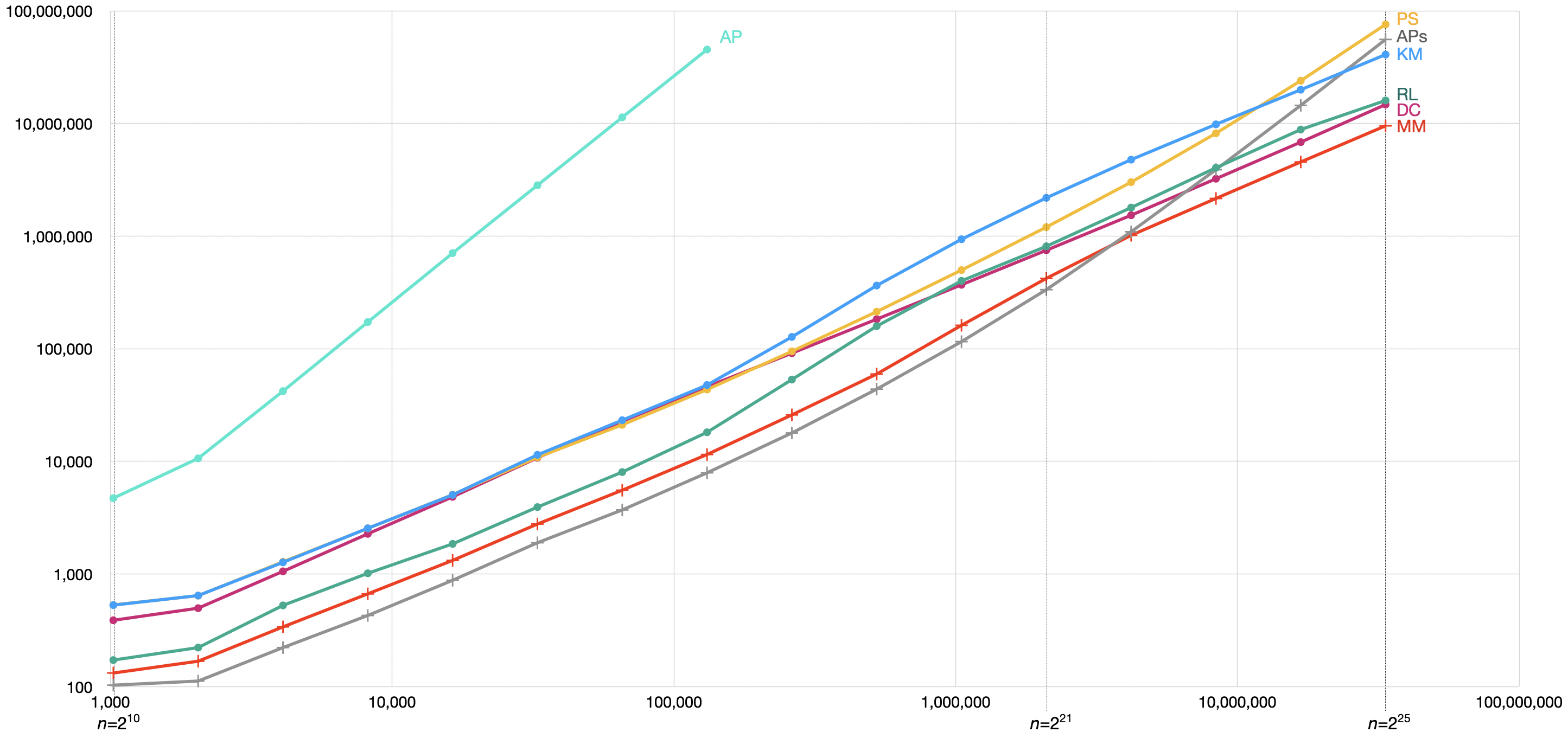}
    \caption{Running times of Algorithms 1-7 with uniformly distributed points (microseconds, log scales).}
    \Description{Seven curves show the running time of the algorithms for n = 1024 to $33\,554\,432$ points rising from the lower left to the upper right.}
    \label{fig:UD-RT-C}
\end{figure}

\begin{figure}[h]
    \centering
    \includegraphics[width=\curvewidth]{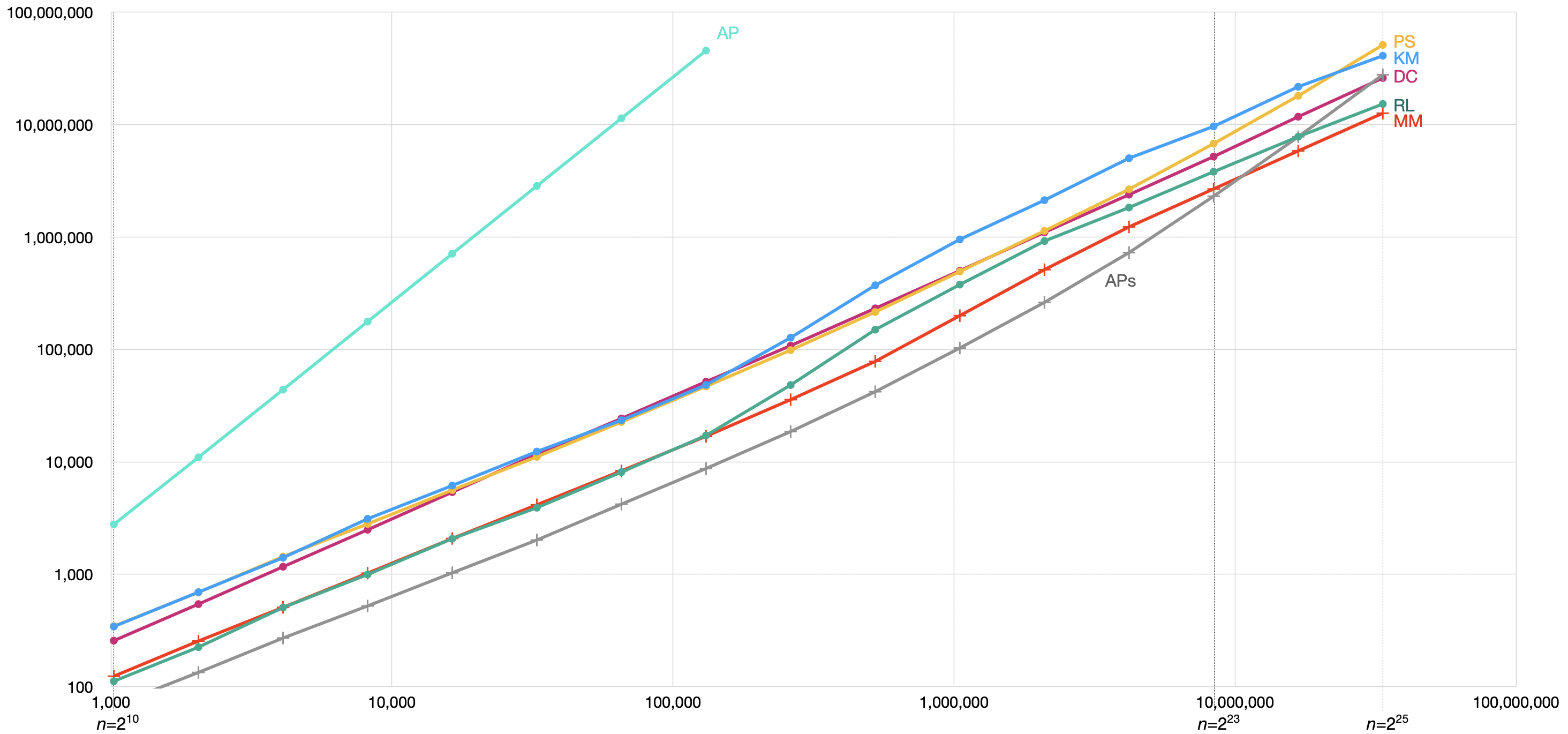}
    \caption{Running times of Algorithms 1-7 with truncated normally distributed points ($\mu = (0.5, 0.5)^\top$, $\sigma = 0.2$, microseconds, log scales).}
    \Description{Seven curves show the running time of the algorithms for n = 1024 to $33\,554\,432$ points rising from the lower left to the upper right, similar to Figure \ref{fig:UD-RT-C}.}
    \label{fig:ND-RT-C}
\end{figure}

The experiments show that cppMM outperforms all other algorithms in both cases of distribution, however overtaking cppAPs only at $n = 2^{22}$ and $2^{24}$, respectively. Compared to its fastest overall contender, cppRL, cppMM is easier to implement since there is no need to resort to a hash table simulating the two-dimensional grid, because the grid used in cppMM is of size $O(n)$ only and can therefore be implemented as a simple two-dimensional array, guaranteeing constant and linear access time to individual and all cells, respectively. In addition, it is a purely deterministic algorithm, in contrast to cppRL that uses a random approximation.

Of course, it is easy to construct a worst-case problem instance that leads to a running time of $ \Theta(n^2)$ of cppMM:~Put most points into only one of the squares of the $\bar{\delta}$-grid\footnote{Provided some points still reside near the boundary of the unit square as in Figure \ref{fig:Grid_n2}, since otherwise the grid is refined accordingly, see lines 2--4 in Algorithm \ref{alg:cppMM}.\label{FN:boundarypoints}}, so that cppMM is forced to compare all $O(n^2)$ pairs of points inside this square and therefore effectively turns into cppAP. Figure \ref{fig:Grid_n2} illustrates this scenario.

\begin{figure}[h]
    \centering
    \includegraphics[width=\gridwidth]{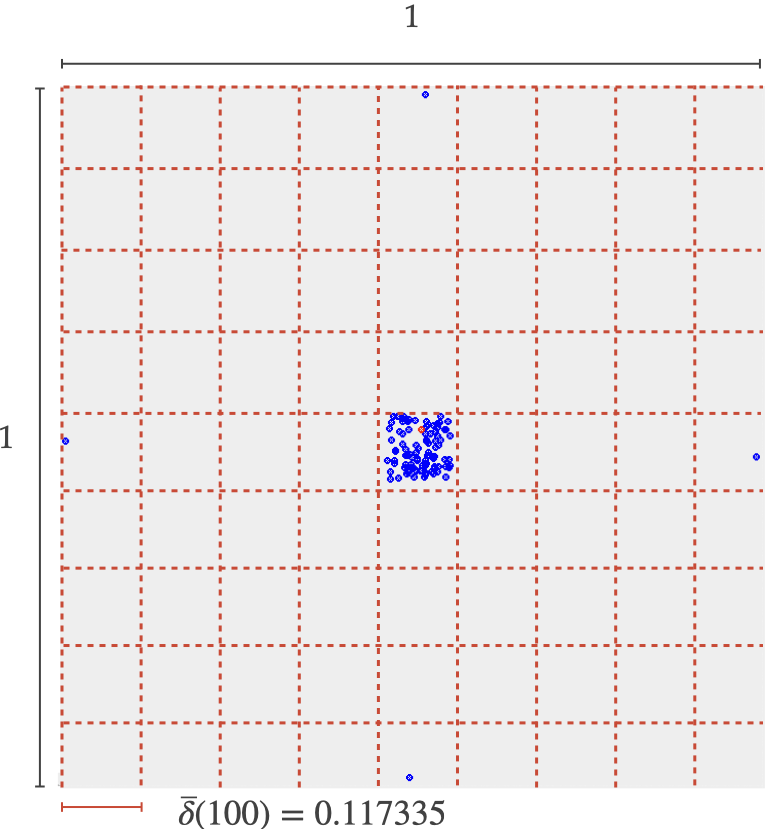}
    \caption{Distribution of $n = 100$ points in the unit square yielding $\Theta(n^2)$ running time of cppMM.} 
    \Description{All points reside in a central grid cell except four points, except for four points, each of which lies on one of the four boundaries of the unit square.}
\label{fig:Grid_n2}
\end{figure}

Next we study the average-case time complexity of cppMM. For uniformly distributed points we prove linear time complexity in Section \ref{sec:complexity}; truncated normally distributed points are discussed in the following.

As illustrated in Figure \ref{fig:sigma6}, there are `harmless' truncated normal distributions with, say, $\sigma \geq 0.25$, which sufficiently resemble the uniform distribution and thus do not significantly degrade the time complexity of cppMM compared to the uniform case. However, the worst-case scenario depicted in Figure \ref{fig:Grid_n2} will most likely not be attained with a normal distribution either:~Consider the case $\sigma=2^{-5}$ depicted in the lower right part of Figure \ref{fig:sigma6}. While this point cloud shares the property that most points are concentrated in the center with Figure \ref{fig:Grid_n2}, the probability of outliers on the boundary is very close to zero due to the small variance. In particular, while for $\sigma = 0.25$ and $n = 2^{24}$ the probability that a point lies in $[0,n^{-1/2}]\cup[n^{-1/2},1]\times[0,n^{-1/2}]\cup[n^{-1/2},1]$\footnote{Recall that $\bar \delta \in O(n^{-1/2})$.} is approximately 1, it is of magnitude $10^{-9}$ for $\sigma=2^{-5}$. Hence, we can readjust the grid accordingly\footnoteref{FN:boundarypoints}, and therefore do not expect a much worse behaviour than in the uniform case. This is confirmed by Figure \ref{fig:sigma6curves}.

\begin{figure}[h]
    \centering
    \includegraphics[width=0.9\textwidth]{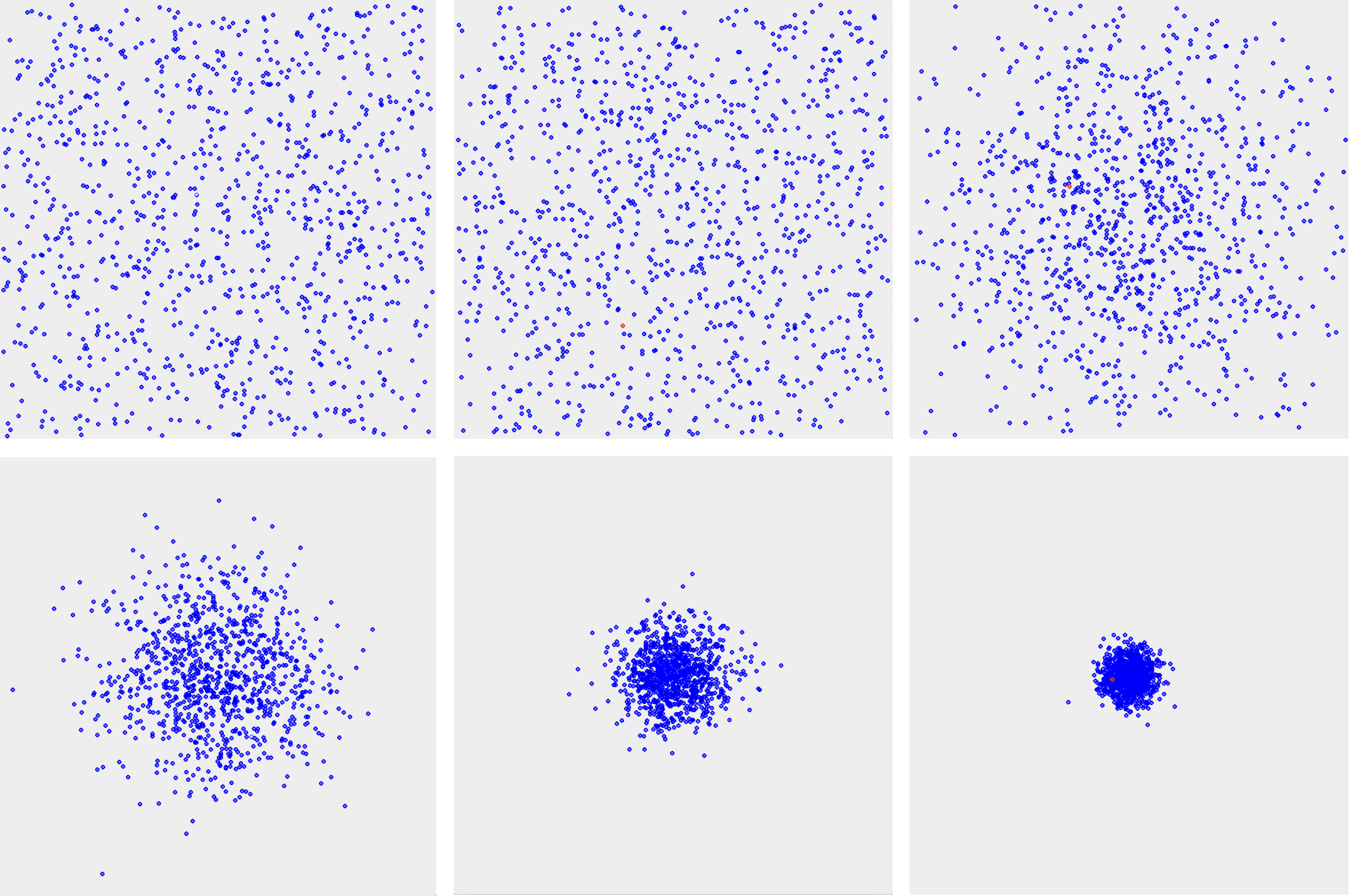}
    \caption{Truncated normally distributed point sets (n = 1024) with $\mu = (0.5, 0.5)^\top$ and $\sigma=2^{-k}$ with $k\in\{0,\ldots,5\}$ from the upper left to the lower right of the figure.}
    \Description{Six point clouds, starting with more or less equally scattered points, more and more concentrating the points in the central area.}
\label{fig:sigma6}
\end{figure}

In Figure \ref{fig:sigma6curves} we study the robustness of the running time of the algorithms considered for decreasing $\sigma$. For each value of $\sigma\in\{2^{-k}\mid k\in\{0,\dots,6\}\}$, ten truncated normally distributed random point sets with $2^{24}$ points were solved. For the total running time we observe that while cppMM is fastest for values $\sigma \geq 1/8$, it slows down a little as $\sigma$ decreases, but still outperforms cppDC, cppPS, and cppKM, and stays on par with cppRL. 

\begin{figure}[h]
    \centering
    \includegraphics[width=\textwidth]{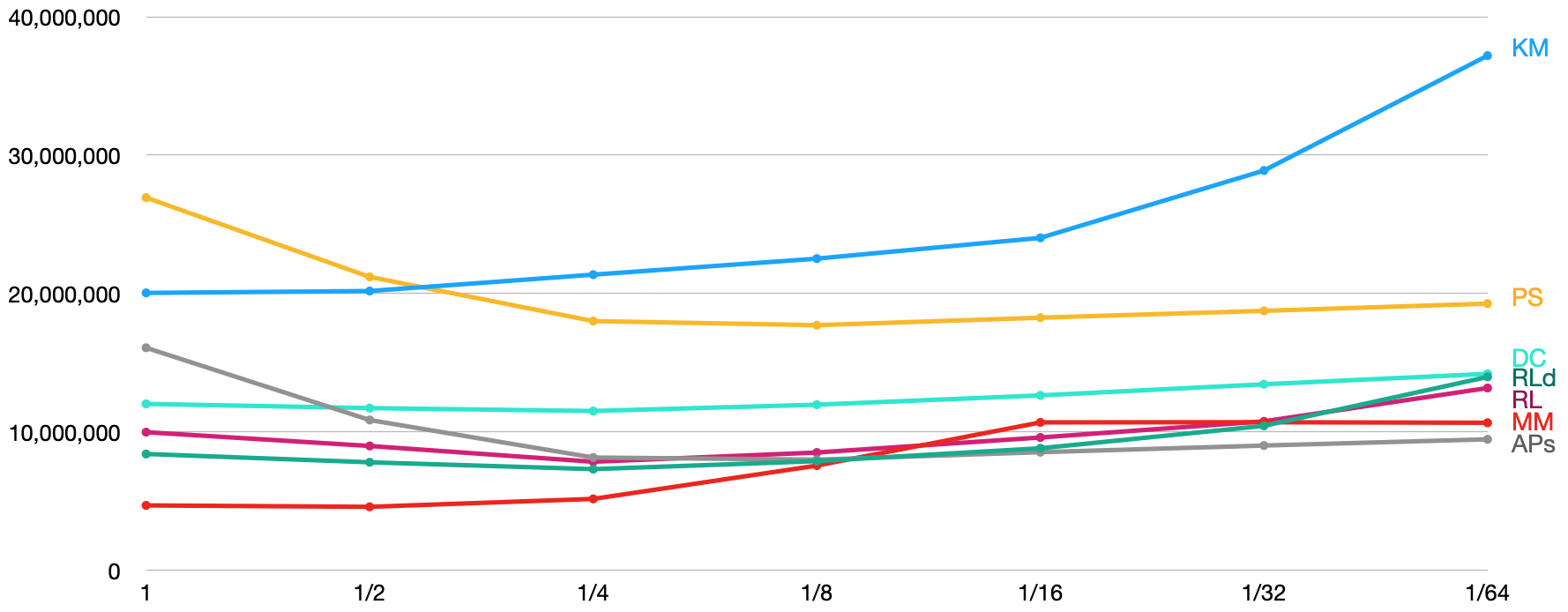}
    \caption{Running time of Algorithms 2-7 with truncated normally distributed point sets with $\sigma$ varying between 1 and 1/64 
    and n = $2^{24} = 16\,777\,216$ points (elapsed time for ten repetitions in microseconds, log scales).}
    \Description{Seven curves show the running time of the algorithms for n = 16777216 points when sigma is reduced from 1 to 1/64 by halving.}
    \label{fig:sigma6curves}
\end{figure}

\section{Complexity result}
\label{sec:complexity}

Finally, we are going to prove linear time complexity for cppMM in the case of a uniformly distributed point set $P$.

Intuitively, building the grid takes one iteration over $P$, thus $O(n)$ steps. The search for the minimum distance itself works as illustrated in Figure \ref{fig:Grid}, that is, each point is matched with partner-points in at most five adjacent grid cells. We have $O(n)$ grid cells for $n$ points. Due to the uniform distribution assumption, each cell is expected to host $O(1)$ points. Thus, the average overall running time is of the order $O(n) + O(n) \times 5 \times O(1) = O(n)$. We will formally prove this result.

\begin{theorem}\label{thm:uniform}
Let the probability space $(\Omega,\mathcal{F},\mathbb{P})$ carry all random variables we consider.
Let $n\in \mathbb{N} \setminus\{1\}$ be the number of points $P_1,\dots,P_n$ in the unit square, let $N=N_1+N_2$ be the total number of computational steps of Algorithm~\ref{alg:cppMM}, where $N_1$ is the number of steps of the pre-computation  (lines 1--10) and $N_2$ is the number of comparisons (lines 15--16).
Assume that the random variables $P_1,\dots,P_n$ are independent and uniformly distributed on the unit square.
Then the computational complexity of Algorithm~\ref{alg:cppMM} is of linear order, that is
$$\mathbb{E}[N]\in O(n).$$
\end{theorem}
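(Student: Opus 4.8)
The plan is to estimate the two contributions separately and show $\mathbb{E}[N_1]\in O(n)$ and $\mathbb{E}[N_2]\in O(n)$. Lines 2--4 compute three extrema over the $n$ points plus one arithmetic operation, and lines 7--10 make a single pass over $P$ that assigns each point to one cell; this is $O(n)$ deterministically. The only part of the pre-computation (together with the overhead of the double loop over all $g^2$ cells on lines 12--13) that is not obviously $O(n)$ is the allocation of the $g\times g$ grid on line 6, where $g=\lceil 1/\bar\delta\rceil$. I would bound $g^2\le(1/\bar\delta+1)^2\le 2\bar\delta^{-2}+2$, and, writing $\bar\delta_0$ for the unscaled bound of \eqref{eq:upperbound} so that $\bar\delta=\bar\delta_0\sqrt{wh}$, use $\bar\delta_0\in\Theta(1/\sqrt n)$ (the $O$-part is already noted in the excerpt, the matching $\Omega$ is equally immediate from the explicit formulas) to get $\bar\delta^{-2}=\bar\delta_0^{-2}(wh)^{-1}\le c\,n\,(wh)^{-1}$. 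Since the $x$- and $y$-coordinates of the points are independent, the ranges $w$ and $h$ are independent, so $\mathbb{E}[(wh)^{-1}]=\mathbb{E}[w^{-1}]\,\mathbb{E}[h^{-1}]$; and the range $R_n$ of $n$ i.i.d.\ uniform$[0,1]$ variables has density $n(n-1)r^{n-2}(1-r)$, whence $\mathbb{E}[R_n^{-1}]=n/(n-2)$ for $n\ge 3$. This gives $\mathbb{E}[g^2]\in O(n)$ and hence $\mathbb{E}[N_1]\in O(n)$ (the statement being asymptotic, the value at $n=2$ is irrelevant).

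The heart of the argument is $N_2$. Write $M_{i,j}=|G_{i,j}|$, with $M_{i,j}=0$ for out-of-range indices. Counting the inner double loop, cell $(i,j)$ contributes exactly $M_{i,j}\bigl((M_{i,j}-1)+M_{i,j+1}+M_{i+1,j-1}+M_{i+1,j}+M_{i+1,j+1}\bigr)$ distance comparisons, so by $ab\le\tfrac12(a^2+b^2)$ and the fact that each cell appears as a ``neighbour'' of at most four others,
\begin{equation*}
N_2\ \le\ 5\sum_{i,j}M_{i,j}^2 .
\end{equation*}
Then I would rewrite $\sum_{i,j}M_{i,j}^2=\sum_{i,j}\bigl(\sum_k\mathbf{1}\{P_k\in G_{i,j}\}\bigr)^2=n+\sum_{k\neq l}\mathbf{1}\{P_k,P_l\text{ in the same cell}\}$. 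The key observation is that if $P_k$ and $P_l$ lie in the same cell then $|P_k.x-P_l.x|<\bar\delta$ and $|P_k.y-P_l.y|<\bar\delta$, and since $w,h\le 1$ we have $\bar\delta\le\bar\delta_0\le c_0/\sqrt n$ \emph{deterministically}; hence this event is contained in $\{|P_k.x-P_l.x|<c_0/\sqrt n\}\cap\{|P_k.y-P_l.y|<c_0/\sqrt n\}$. For a fixed pair $k\neq l$ the points $P_k,P_l$ are independent and uniform on $[0,1]^2$, so this set has probability at most $(2c_0/\sqrt n)^2=4c_0^2/n$. Therefore $\mathbb{E}\bigl[\sum_{i,j}M_{i,j}^2\bigr]\le n+n(n-1)\cdot 4c_0^2/n\in O(n)$, so $\mathbb{E}[N_2]\in O(n)$, and adding the two bounds yields $\mathbb{E}[N]=\mathbb{E}[N_1]+\mathbb{E}[N_2]\in O(n)$.

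The main obstacle to keep an eye on is that the cell width $\bar\delta$ is itself a function of the sample (through $w$ and $h$), so one must resist the temptation to treat the occupancies $M_{i,j}$ as binomial with fixed parameters: conditioning on $\bar\delta$ destroys the independence and uniformity of the $P_i$. The device that makes the proof go through is to replace $\bar\delta$ by its deterministic upper bound $\bar\delta_0\in O(1/\sqrt n)$ wherever only an upper estimate is needed — which is exactly the situation for $N_2$, so that genuinely independent pairs $(P_k,P_l)$ can be invoked — while the grid-allocation term in $N_1$ instead needs the matching lower bound $\bar\delta_0\in\Omega(1/\sqrt n)$ together with the independence of the two coordinate ranges. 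A minor point to state explicitly is that $\{w=0\}\cup\{h=0\}$ is a null event, so it does not affect $\mathbb{E}[N]$.
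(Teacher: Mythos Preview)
Your proof is correct and in fact more careful than the paper's own. Both arguments rest on the same second-moment idea—controlling the cost by $\sum_{i,j}\mathbb{E}[M_{i,j}^2]$—but the executions differ. The paper expands $N_2$ cell by cell, shows each $\mathbb{E}[N_{i,j}^2]$ and each cross-moment $\mathbb{E}[N_{i,j}N_{\alpha,\beta}]$ is $O(1)$ via the independence of the $P_k$, and then multiplies by $M^2\in O(n)$ cells; in doing so it tacitly treats the cell width $\bar\delta$ (and hence $M$ and the cells $G_{i,j}$) as deterministic, which they are not once the scaling on line~4 is taken into account. Your device of summing over \emph{point pairs} rather than over cells via $\sum_{i,j}M_{i,j}^2=n+\sum_{k\neq l}\mathbf{1}\{P_k,P_l\text{ same cell}\}$, and then absorbing the random cell width into the deterministic inequality $\bar\delta\le\bar\delta_0\le c_0/\sqrt{n}$, sidesteps this issue entirely and is both shorter and more robust—exactly the obstacle you flag in your final paragraph. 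You also explicitly cost the $g\times g$ grid allocation and the loop over all $g^2$ cells, which genuinely requires $\mathbb{E}[g^2]\in O(n)$ rather than a pathwise bound; your computation $\mathbb{E}[R_n^{-1}]=n/(n-2)$ from the range density is the right way to get it, and the paper's proof does not address this point.
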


\begin{proof}
Clearly, 
$$\mathbb{E}[N_1]\in O(n),$$
since we iterate over all $n = |P|$ points. Line 9 is $O(1)$, since adding an element to the bucket $G_{i,j}$ requires a constant number of steps.

Denote $M=\lceil 1/\bar\delta \rceil$
and let for all $i,j\in\{1,\dots,M\}$, $N_{i,j}$ be the number of points in the grid cell $G_{i,j}$.
In each grid cell $G_{i,j}$ that we visit, we compare all points $N_{i,j}$ in this grid cell with the other points in the grid cell, taking $N_{i,j}\cdot(N_{i,j}-1)/2$ operations, and with all points in four adjecent cells, taking $N_{i,j}\cdot(N_{i,j+1}+N_{i+1,j-1}+N_{i+1,j}+N_{i+1,j+1})$ operations. On the boundaries of the grid, we make less comparisons. In the following expression, the first double sum accounts for interior grid cells, while the single sums treat the boundaries of the unit square. We have
\begin{equation}\label{eq:N2}
\begin{aligned}
\mathbb{E}[N_2] 
& =
\mathbb{E}\Big[
\sum_{i=1}^{M-1}\sum_{j=2}^{M-1} N_{i,j}\cdot\Big(\frac{1}{2}(N_{i,j}-1)+N_{i,j+1}+N_{i+1,j-1}+N_{i+1,j}+N_{i+1,j+1}\Big)
\\&\quad+ \sum_{i=1}^{M-1} N_{i,1}\cdot\Big(\frac{1}{2}(N_{i,1}-1)+N_{i,2}+N_{i+1,1}+N_{i+1,2}\Big)
\\&\quad+\sum_{i=1}^{M-1} N_{i,M}\cdot\Big(\frac{1}{2}(N_{i,M}-1)+N_{i+1,M-1}+N_{i+1,M}\Big)
\\&\quad+\sum_{j=1}^{M-1} N_{M,j}\cdot\Big(\frac{1}{2}(N_{M,j}-1)+N_{M,j+1}\Big)
+ \frac{1}{2}N_{M,M}(N_{M,M}-1)
\Big]
\\&=
\sum_{i=1}^{M-1}\sum_{j=2}^{M-1}\Big(
\frac{1}{2}(\mathbb{E}[N_{i,j}^2]-\mathbb{E}[N_{i,j}])+\mathbb{E}[N_{i,j}\cdot N_{i,j+1}]
\\&\quad\quad\quad\quad\quad\quad+\mathbb{E}[N_{i,j}\cdot N_{i+1,j-1}]
+\mathbb{E}[N_{i,j}\cdot N_{i+1,j}]+\mathbb{E}[N_{i,j}\cdot N_{i+1,j+1}]\Big)
\\&\quad+ \sum_{i=1}^{M-1}\Big(
\frac{1}{2}(\mathbb{E}[N_{i,1}^2]-\mathbb{E}[N_{i,1}])
+
\mathbb{E}[N_{i,1}\cdot N_{i,2}]+\mathbb{E}[N_{i,1}\cdot N_{i+1,1}]
+\mathbb{E}[N_{i,1}\cdot N_{i+1,2}]\Big)
\\&\quad+\sum_{i=1}^{M-1} \Big(\frac{1}{2}(\mathbb{E}[N_{i,M}^2]-\mathbb{E}[N_{i,M}])
+\mathbb{E}[N_{i,M}\cdot N_{i+1,M-1}]
+\mathbb{E}[N_{i,M}\cdot N_{i+1,M}]\Big)
\\&\quad+\sum_{j=1}^{M-1}\Big(\frac{1}{2} (\mathbb{E}[N_{M,j}^2]-\mathbb{E}[N_{M,j}])
+\mathbb{E}[N_{M,j}\cdot N_{M,j+1}]\Big)
+ \frac{1}{2}(\mathbb{E}[N_{M,M}^2]-\mathbb{E}[N_{M,M}]).
\end{aligned}
\end{equation}
For all $i,j\in\{1,\dots,M\}$ we have that $N_{i,j}=\sum_{k=1}^n 1_{\{P_k\in G_{i,j}\}}$.
Since the $n$ points are uniformly distributed on the unit square and since for all $i,j\in\{1,\dots,M\}$, the area of the cell $G_{i,j}$ is $\bar\delta^2 \in O(1/n)$, we have
\begin{equation}\label{eq:ud_1}
 \mathbb{E}[N_{i,j}] = \sum_{k=1}^n \mathbb{P}(\{P_k\in G_{i,j}\})
 \in n \times O(1/n) = O(1).
\end{equation}

Next, for all $i,j\in\{1,\dots,M\}$,
\begin{equation}\label{eq:ud_2-step1}
\begin{aligned}
\mathbb{E}[N_{i,j}^2]
&=
\sum_{k=1}^n \sum_{\ell=1}^n \mathbb{E}[1_{\{P_k\in G_{i,j}\}} \cdot 1_{\{P_\ell\in G_{i,j}\}}]
=
\sum_{k=1}^n \Big( \mathbb{P}(\{P_k\in G_{i,j}\}) + \sum_{\substack{\ell=1\\ \ell \neq k}}^n \mathbb{E}[1_{\{P_k\in G_{i,j}\}\cap\{P_\ell\in G_{i,j}\}}]\Big)
\\&=
\mathbb{E}[N_{i,j}]+
\sum_{k=1}^n \sum_{\substack{\ell=1\\ \ell \neq k}}^n \mathbb{P}(\{P_k\in G_{i,j}\}\cap\{P_\ell\in G_{i,j}\}).
\end{aligned}
\end{equation}
Since for $\ell\neq k$, $P_{\ell}$ and $P_k$ are independent, we have
\begin{equation*}
\begin{aligned}
\mathbb{P}(\{P_k\in G_{i,j}\}\cap\{P_\ell\in G_{i,j}\})
&=
\mathbb{P}(\{P_k\in G_{i,j}\})
\cdot \mathbb{P}(\{P_\ell\in G_{i,j}\})
\in O(1/n) \times O(1/n).
\end{aligned}
\end{equation*}
Hence,
\begin{equation}\label{eq:ud_2}
\mathbb{E}[N_{i,j}^2]\in O(1) + n \times (n-1) \times O(1/n) \times O(1/n)
= O(1).
\end{equation}

The last class of summands that appear in \eqref{eq:N2} are $\mathbb{E}[N_{i,j}\cdot N_{\alpha,\beta}]$ with $(i,j)\neq(\alpha,\beta)$. For all $i,j,\alpha,\beta\in\{1,\dots,M\}$ with $(i,j)\neq(\alpha,\beta)$ we have
\begin{equation}\label{eq:ud_3-step1}
\begin{aligned}
\mathbb{E}[N_{i,j}\cdot N_{\alpha,\beta}]
&=
\sum_{k=1}^n \sum_{\ell=1}^n \mathbb{E}[1_{\{P_k\in G_{i,j}\}} \cdot 1_{\{P_\ell\in G_{\alpha,\beta}\}}]
\\&=
\sum_{k=1}^n \Big( \mathbb{E}[1_{\{P_k\in G_{i,j}\}\cap\{P_k\in G_{\alpha,\beta}\}}] + \sum_{\substack{\ell=1\\ \ell \neq k}}^n \mathbb{E}[1_{\{P_k\in G_{i,j}\}\cap\{P_\ell\in G_{\alpha,\beta}\}}]\Big)
\\&=
0+
\sum_{k=1}^n \sum_{\substack{\ell=1\\ \ell \neq k}}^n \mathbb{P}(\{P_k\in G_{i,j}\}\cap\{P_\ell\in G_{\alpha,\beta}\}).
\end{aligned}
\end{equation}
This ensures that
\begin{equation}\label{eq:ud_3}
\mathbb{E}[N_{i,j}\cdot N_{\alpha,\beta}]\in n \times (n-1) \times O(1/n) \times O(1/n)
= O(1).
\end{equation}
Recalling that $M=1/\lceil 1/\bar\delta \rceil\in O(\sqrt{n})$ and combining \eqref{eq:N2} with \eqref{eq:ud_1}, \eqref{eq:ud_2}, and \eqref{eq:ud_3} gives
\begin{equation*}
\mathbb{E}[N_2] \in O(\sqrt{n}) \times O(\sqrt{n}) \times O(1)
= O(n).
\end{equation*}
This closes the proof.
\end{proof}

\section{Conclusion}

While classical solutions to the CPP problem cannot be improved from a theoretical perspective, we found ample opportunities for improving the effective algorithmic runtime.
We presented two novel algorithms that are  deterministic and non-recursive, based on theoretical findings from mathematical optimal packing theory.
For cppMM and uniformly distributed points we proved $O(n)$ time complexity. 
We conducted a comparison study to classical approaches. The findings are summarized in
Table \ref{tab:overview}.

\begin{table}[ht]
\scriptsize
\caption{Summary of Algorithms 1-7.} 
\label{tab:overview}
\begin{tabular}{lcccccc}
    \toprule
    Algorithm & No. & Ease of                  & Average                       & Worst         & Sensitive                       & Measured \\
              &     & implementation           & case time                     & case time     & to $\sigma$                     & running time \\
              &     &                          & complexity                    & complexity    &            & (UD, ND with $\sigma \geq 1/4$) \\
              &     & I=easy -- III=hard &                               &               &  & I=low -- III=high \\
    \midrule
    cppAP     & 1   & I                        & $O(n^2)$                      &  $O(n^2)$     & no                               & III \\
    cppDC     & 2   & III                        & $O(n \log n)$                 & $O(n \log n)$ & no                               & II \\
    cppPS     & 3   & II                        & $O(n \log n)$                        & $O(n \log n)$        & no                               & II \\
    cppRL     & 4   & II                        & $O(n)$                        & $O(n)$        & no                               & II \\
    cppKM     & 5   & III                        & $O(n)$                        & $O(n)$        & empirically:~yes                               & II \\
    cppAPs    & 6   & I                        & $O(n^2)$ & $O(n^2)$      & no                               & I--II \\
    cppMM     & 7   & II                        & $O(n)$\footnotemark                        & $O(n^2)$      & not to a relevant extent                               & I \\
  \bottomrule
\end{tabular}
\end{table}
\footnotetext{Proven for uniformly distributed points, observable also for truncated normally distributed points.}

We observe that with respect to the average case asymptotic time complexity, cppMM (for uniformly distributed point sets), cppRL, and cppKM are on par, followed by cppDC and cppPS, and finally, by cppAPs and cppAP. However, as shown in Figure \ref{fig:for2}, up to very high values of $n$, the complexity of cppAPs is much closer to $O(n)$ than to $O(n^2)$ and only approaches the asymptotic value slowly and for very large values of $n$. 

Concerning the measured running time, cppMM dominates the other algorithms 
except for cppAPs, which is the absolute winner up to rather large problem sizes.

Most algorithms are insensitive to the choice of the point set.
This is empirically also confirmed for cppMM, while there exist constructed worst case examples where it needs $O(n^2)$ time.

With respect to the ease of implementation, we assigned the algorithms to classes I to III depending on the number of lines of code used and the auxiliary data structures or algorithms required. Class I contains the two extremely simple `nested loop' algorithms cppAP and cppAPs; cppPS, cppRL, and cppMM populate class II, while cppDC and cppKM exhibit the most complex implementations. 

Taken together, our algorithms are at least as fast as their contenders while being easy (cppMM) or even extremely easy (cppAPs) to implement. Of course, despite our best efforts, we cannot rule out the possibility that our measurements are biased by individual weaknesses in the implementations (see the source code at https://www.aau.at/en/isys/ias/research/). We would be grateful for any information in this regard.

\section*{Declarations}
\subsection*{Funding / Competing Interests}
This work was not funded externally. The authors have no competing interests to declare that are relevant to the content of this article.


\bibliography{cpp}

\end{document}